\def\endthebibliography{%
	\def\@noitemerr{\@latex@warning{Empty `thebibliography' environment}}%
	\endlist
}
\newtheorem{remark}{\textbf{Remark}}
\newtheorem{lem}{\textbf{Lemma}}
\newtheorem{theorem}{\textbf{Theorem}}
\newtheorem{assumption}{\textbf{Assumption}}
\newtheorem{definition}{\textbf{Definition}}
\newtheorem{corollary}{\textbf{Corollary}}
\newcommandx{\nader}[2][1=]{\todo[linecolor=orange,backgroundcolor=orange!25,bordercolor=orange,author=Nader,#1]{#2}}
\newcommandx{\mehdi}[2][1=]{\todo[linecolor=green,backgroundcolor=green!25,bordercolor=green,author=Mehdi,#1]{#2}}
\title{\LARGE \bf
A Dynamic Coding Scheme to Prevent Covert Cyber-Attacks in Cyber-Physical Systems}
\author{Mahdi Taheri$^{1}$, Khashayar Khorasani$^{2}$, and Nader Meskin$^{3}$
\thanks{$^{1}$Mahdi Taheri (mtaheri@caltech.edu) is with the Division of Engineering and Applied Science, California Institute of Technology, Pasadena, CA, USA. This work was conducted during his Ph.D. studies at Concordia University.}%
\thanks{$^{2}$Khashayar Khorasani (kash@ece.concordia.ca) is with the Department of Electrical and Computer Engineering, Concordia University, Montreal, Canada.}%
\thanks{$^{3}$Nader Meskin (nader.meskin@qu.edu.qa) is with the Department of Electrical Engineering, Qatar University, Doha, Qatar.}%
\thanks{
	The authors would like to acknowledge the financial support received from NATO under the Emerging Security Challenges Division program. K. Khorasani and N. Meskin would like to acknowledge the support received from NPRP grant number 10-0105-17017 from the Qatar National Research Fund (a member of Qatar Foundation). K. Khorasani would also like to acknowledge the support received from the Natural Sciences and Engineering Research Council of Canada (NSERC) and the Department of National Defence (DND) under the Discovery Grant and DND Supplemental Programs. This work was also supported in part by funding from the Innovation for Defence Excellence and Security (IDEaS) program from	the Department of National Defence (DND). Any opinions and conclusions in this work are strictly those of the authors and do not reflect the views,
	positions, or policies of - and are not endorsed by - IDEaS, DND, or the Government of Canada.}
}
\begin{document}

\maketitle
\thispagestyle{empty}
\pagestyle{empty}

\begin{abstract}
	In this paper, we address two main problems in the context of covert cyber-attacks in cyber-physical systems (CPS). First, we aim to investigate and develop necessary and sufficient conditions in terms of disruption resources of the CPS that enable adversaries to execute covert cyber-attacks. These conditions can be utilized to identify the input and output communication channels that are needed by adversaries to execute these attacks. Second, this paper introduces and develops a dynamic coding scheme as a countermeasure against covert cyber-attacks. Under certain conditions and assuming the existence of one secure input and two secure output communication channels, the proposed dynamic coding scheme prevents adversaries from executing covert cyber-attacks. A numerical case study of a flight control system is provided to demonstrate the capabilities of our proposed and developed dynamic coding scheme. 
\end{abstract}

\section{Introduction}
Addressing the problem of cyber-threats in cyber-physical systems (CPS) has attracted a significant amount of attention over the past two decades \cite{cardenas2008secure,AAPC,rditp,adaii,ascf,10298792}. Several cases of cyber-attacks in CPS such as in transportation systems, power systems and smart grids, and industrial process control systems have been reported in recent years \cite{sandberg2022secure}. Consequently, it is imperative to study cyber-attacks in CPS and investigate countermeasure strategies for them.

A certain class of cyber-attacks is referred to as \textit{perfectly undetectable} since the impact of the attack signal cannot be observed in the sensor readings of the CPS \cite{9039550}. One of the cyber-attacks that belongs to this class is known as covert attack. In a covert attack, adversaries inject their desired attack signals into the input communication channels of the CPS and eliminate the impact of the injected signals from sensor readings by compromising the output communication channels \cite{smith2011decoupled,adaii}. Consequently, adversaries need to have access to disruption resources and system knowledge to carry out their covert cyber-attacks \cite{ascf}.

In \cite{ascf}, in order to execute covert cyber-attacks, it has been stated that adversaries need to have access to all actuators and sensors and to have a full system knowledge. On the other hand, in \cite{9683376}, authors have shown that in order to perform covert attacks, it is sufficient for adversaries to know the Markov parameters of the CPS and to have access to a certain number of input and output communication channels. Moreover, in \cite{kwon2013security}, necessary and sufficient conditions for performing covert attacks are derived in terms of the span of the controllability matrix of the CPS and its connection with input channels that are accessible by adversaries. However, one requires analyzing the span of the controllability matrix to investigate the stated necessary and sufficient conditions for performing covert attacks, which may be challenging in case of large-scale CPS. Consequently, developing and investigating readily verifiable necessary and sufficient conditions on the required disruption resources of the CPS for executing covert cyber-attacks is one of the problems that we aim to address in this paper.

Considering that the impact of covert cyber-attacks cannot be observed in the sensor readings, it is challenging to detect them. Centralized and distributed observer-based methodologies have been developed in the literature to detect covert cyber-attacks in large-scale interconnected CPS \cite{9104026,10298792}. In addition to observer-based methodologies, active cyber-attack detection methodologies such as watermarking methods, modulations, and coding schemes have been utilized in the literature to detect stealthy cyber-attacks \cite{docaazdaicps,ferrari2020switching,7011170,miao2016coding,fang2019two,9683376,9683075}. In \cite{docaazdaicps}, a modulation matrix has been employed on the input communication channels that disrupts the adversary's perfect system knowledge for performing covert attacks. Moreover, authors in \cite{9683075} have developed and proposed a multiplicative watermarking method that relies on the output-to-output gain between the residual of a detection method and certain performance metrics. Authors in \cite{9683376} have focused on targeting the required disruption resources for adversaries in implementing covert attacks. In \cite{9683376}, a coding scheme is proposed that under certain conditions increases the number of required output communication channels that should be compromised by adversaries to its maximum value. 

In this paper we develop a dynamic coding scheme consisting of an encoder on the C\&C side of the CPS and a decoder on the plant side. The control input from the C\&C side will be encoded using the encoder and eventually it will be decoded on the plant side. Hence, our proposed dynamic coding scheme does not affect the performance and control objectives of the CPS. As the main assumption of this work, we consider the existence of one secure input and two secure output communication channels. Hence, under certain conditions, even if adversaries have knowledge of the parameters of the coding scheme, they will not be able to perform covert cyber-attacks in the CPS. 

In summary, the contributions of this paper are as follows:

\begin{enumerate}
	\item Necessary and sufficient conditions under which covert cyber-attacks can be performed in the CPS are derived. The developed conditions can be used to determine which disruption resources in terms of input and output communication channels of the CPS should be compromised to carry out covert attacks.
	\item As an active countermeasure against covert attacks, we develop and propose a dynamic coding scheme. The proposed dynamic coding scheme includes an encoder on the C\&C side and a decoder on the plant side of the CPS. Under certain conditions, if there exists one secure input and two secure output communication channels, adversaries will not be capable of performing covert cyber-attacks in the CPS.
\end{enumerate}

As opposed to the derived necessary and sufficient conditions for performing covert attacks in \cite{kwon2013security} that require determining the span of the controllability matrix of the CPS, in our first contribution, we develop necessary and sufficient conditions that rely on the relative degree of the CPS which can be computed using simple matrix multiplications. As for the second contribution of this work, we develop a dynamic coding scheme for preventing adversaries from performing covert attacks that relaxes the assumption of having only one compromised input communication channel for the static coding scheme that is proposed in \cite{9683376}. Moreover, in \cite{ferrari2020switching}, it is assumed that adversaries do not know the parameters of the watermarking scheme. However, in this work, we assume that adversaries know the parameters of our dynamic coding scheme. Consequently, in comparison to \cite{ferrari2020switching} where adversaries' system knowledge is compromised, in our proposed dynamic coding scheme, adversaries' disruption resources are targeted.

\section{Problem Statement and Formulation}\label{s:problem}
\subsection{State-Space Representation of the CPS Under Cyber-Attacks}
Let us consider a discrete-time linear time-invariant cyber-physical system (CPS) represented as follows:
\begin{align}\label{e:CPS}
	x(k+1)=&Ax(k)+Bu(k)+\omega(k), \nonumber \\
	y(k)=& Cx(k)+\nu(k),
\end{align}
where $x(k) \in \mathbb{R}^n$, $u(k)\in\mathbb{R}^m$, and $y(k) \in \mathbb{R}^p$ denote the state, the control input, and the sensor measurement, respectively. The process and measurement noise are denoted by $\omega(k) \in \mathbb{R}^n$ and $\nu(k) \in \mathbb{R}^p$ as zero-mean Gaussian distributions, respectively. Matrices $A$, $B$, and $C$ have appropriate dimensions.

In presence of actuator and sensor cyber-attacks, the CPS \eqref{e:CPS} can be expressed as:
\begin{align}\label{e:CPS_attacked}
	x(k+1)=&Ax(k)+B(u(k)+L_\text{a} a_\text{u}(k))+\omega(k), \nonumber \\
	y(k)=& Cx(k)+D_\text{a}a_\text{y}(k)+\nu(k),
\end{align}
where $a_\text{u}(k) \in \mathbb{R}^{m_\text{a}}$, $m_\text{a}\leq m$ represents the actuator attack signal and $a_\text{y}(k) \in \mathbb{R}^{p_\text{a}}$, $p_\text{a}\leq p$  denotes the sensor attack signal. Additionally, the signatures of actuator and sensor cyber-attack signals are captured by $B_\text{a}=BL_\text{a}$ and $D_\text{a}$, respectively. Furthermore, in our derivation, we omit the consideration of noise effects on the CPS and assume that $\omega(k)=0$ and $\nu(k)=0$ for all $k\geq 0$. Nevertheless, to demonstrate the robustness of our proposed methodologies against noise, the numerical case study in Section~\ref{s:simu} is conducted in presence of both process and sensor noise.

Consider $\mathcal{I}_{\text{a}}=\{u_1^{\text{a}},\dots,u_{m_\text{a}}^{\text{a}}\}$ and $\mathcal{S}_{\text{a}}=\{s_1^{\text{a}},\dots,s_{p_\text{a}}^{\text{a}}\}$ as the index sets of compromised input and output communication channels with $|\mathcal{I}_{\text{a}}|=m_\text{a}$ and $|\mathcal{S}_{\text{a}}|=p_\text{a}$, respectively, where $u_i^{\text{a}}$ belongs to the set of inputs $\{u_1,\, \dots, u_\text{m}\}$, for $i=1,\dots,m_\text{a}$, $s_q^{\text{a}}$ belongs to the set of outputs $\{\,y_1,\, \dots, y_\text{p}\}$, for $q=1,\dots,p_\text{a}$, and $|\cdot|$ denotes the cardinality of a set. The matrices $L_\text{a}$ and $D_\text{a}$ are defined with entries corresponding to the compromised elements in the communication channels. We have $L_\text{a}=[l_{u_1^{\text{a}}},\dots,l_{u_{m_\text{a}}^{\text{a}}}]$, where $l_{i}\in \mathbb{R}^{m}$ is a vector with all its entries equal to zero except for the $i$-th element that is equal to one, for $i=u_1^{\text{a}},\dots,u_{m_\text{a}}^{\text{a}}$. Moreover, $D_\text{a}=[d_{s_1^{\text{a}}},\dots,d_{s_{p_\text{a}}^{\text{a}}}]$, where all entries of $d_{q}\in \mathbb{R}^{p}$ are zero except for the $q$-th element, for $q=s_1^{\text{a}},\dots,s_{p_\text{a}}^{\text{a}}$.

The following definition and lemma are required in the remainder of the paper.
\begin{definition}[Left-Invertibility \cite{ctfls}]\label{def:leftInvertible}
	For system \eqref{e:CPS}, let ${x}(0)=0$, $\omega(k)=0$, and $\nu(k)=0$. The triple $({C},{A},{B})$ is left-invertible if for any ${y}(k)=0$, $\forall k\geq 0$, one has ${u}(k)= 0$, $\forall k\geq 0$.
\end{definition}

\begin{lem}[\cite{ctfls}]\label{lem:leftInv}
	The triple $({C},{A},{B})$ is left-invertible if and only if $\text{rank}(P(\lambda))=n+m$ for all but finitely many $\lambda\in\mathbb{C}$, where 
	\begin{equation}
		P(\lambda)=\begin{bmatrix}
			\lambda I-A & -B \\
			C & 0
		\end{bmatrix},
	\end{equation}
	is the Rosenbrock system matrix.
\end{lem}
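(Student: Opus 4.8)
The plan is to route the proof through the transfer matrix $G(z)=C(zI-A)^{-1}B$ of the triple $(C,A,B)$ and to establish two equivalences: (i) left-invertibility in the sense of Definition~\ref{def:leftInvertible} holds if and only if $G(z)$ has full column rank $m$ as a matrix over the field $\mathbb{C}(z)$ of rational functions (i.e. normal rank $m$); and (ii) $G(z)$ has normal rank $m$ if and only if $\operatorname{rank}(P(\lambda))=n+m$ for all but finitely many $\lambda\in\mathbb{C}$. Chaining (i) and (ii) yields the lemma.

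For (i), I would take $z$-transforms of \eqref{e:CPS} under $x(0)=0$, $\omega=\nu=0$, obtaining $Y(z)=G(z)U(z)$. If $G(z)$ has full column rank over $\mathbb{C}(z)$, then $Y(z)=0$ forces $U(z)=0$, so $u(k)=0$ for all $k\ge 0$, which is exactly left-invertibility. Conversely, if $\operatorname{rank}(G(z))<m$, there is a nonzero rational vector $v(z)$ with $G(z)v(z)=0$; clearing denominators produces a nonzero polynomial vector $U(z)$, i.e. an admissible (finitely supported) input sequence that generates $Y(z)=0$ and hence zero output for all $k\ge 0$ from the zero initial state, contradicting left-invertibility.

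For (ii), I would use the Schur complement. For every $\lambda$ that is not an eigenvalue of $A$ (all but finitely many $\lambda$), the block $\lambda I-A$ is invertible, so
\[
\operatorname{rank}(P(\lambda))=\operatorname{rank}(\lambda I-A)+\operatorname{rank}\!\big(0-C(\lambda I-A)^{-1}(-B)\big)=n+\operatorname{rank}(G(\lambda)).
\]
Hence $\operatorname{rank}(P(\lambda))=n+m$ at such a $\lambda$ if and only if $\operatorname{rank}(G(\lambda))=m$. Finally, $G(z)$ has normal rank $m$ if and only if some $m\times m$ minor of $G$ is a nonzero rational function, which happens if and only if that minor is nonzero at all but finitely many $\lambda$; this is precisely the statement that $\operatorname{rank}(G(\lambda))=m$, and therefore $\operatorname{rank}(P(\lambda))=n+m$, for all but finitely many $\lambda$.

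I expect the main obstacle to be part (i): rigorously matching the time-domain characterization of left-invertibility (no nonzero input yields zero output from the zero initial state) with the purely algebraic condition that $G(z)$ has full column rank over $\mathbb{C}(z)$. This requires the bookkeeping between finitely supported and general input sequences and careful use of the zero initial condition so that $Y(z)=G(z)U(z)$ holds exactly. By comparison, the Schur-complement rank identity and the passage from the pointwise rank condition to the normal rank of $G$ are routine.
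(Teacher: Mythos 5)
The paper offers no proof of this lemma: it is imported directly from the cited reference \cite{ctfls}, so your argument can only be judged on its own merits. Your route --- left-invertibility $\Leftrightarrow$ $G(z)=C(zI-A)^{-1}B$ has full column rank (normal rank) $m$ over $\mathbb{C}(z)$ $\Leftrightarrow$ $\operatorname{rank}(P(\lambda))=n+m$ off a finite set --- is precisely the classical argument, and your part (ii) is airtight: for $\lambda\notin\sigma(A)$ the block elimination gives $\operatorname{rank}(P(\lambda))=n+\operatorname{rank}(G(\lambda))$, and the passage between pointwise rank off a finite set and normal rank via a nonvanishing $m\times m$ minor is exactly right.

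Two points in your part (i), which you yourself flagged as the main obstacle, do need repair. First, in the converse direction, clearing denominators of $v(z)$ yields a polynomial vector in $z$, which under the causal transform $U(z)=\sum_{k\geq 0}u(k)z^{-k}$ corresponds to impulses at \emph{negative} times, not an admissible input; you must additionally multiply by $z^{-N}$ with $N$ at least the polynomial degree (this preserves membership in $\ker G$), producing a polynomial in $z^{-1}$, i.e.\ a genuine finitely supported causal input with identically zero output. Second, in the forward direction the input in Definition~\ref{def:leftInvertible} is an arbitrary sequence, so $U(z)$ is merely a formal power series in $z^{-1}$ rather than a rational vector, and full column rank of $G$ over $\mathbb{C}(z)$ does not by itself force $U(z)=0$. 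The clean fix is to work in the field $\mathbb{C}((z^{-1}))$ of formal Laurent series at infinity: the identity $Y(z)=G(z)U(z)$ is exact there (it encodes the convolution $y(k)=\sum_{j=1}^{k}CA^{j-1}Bu(k-j)$ with $x(0)=0$, with no convergence issues), $\mathbb{C}(z)$ embeds into $\mathbb{C}((z^{-1}))$ by expansion at infinity, and matrix rank is invariant under field extension, so $G$ retains full column rank over $\mathbb{C}((z^{-1}))$ and annihilates no nonzero formal power series. With these two additions your proof is complete and coincides with the standard one in the cited source.
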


\subsection{Input/Output Model of the CPS Under Cyber-Attacks}\label{ss:stealthy}
In this section, we present the Input/Output (I/O) model for the CPS \eqref{e:CPS_attacked} within a specified time window. Subsequently, we define perfectly undetectable cyber-attacks in terms of the I/O model.

The I/O model of the CPS \eqref{e:CPS_attacked} over the time window $\{0,1,\dots,N-1\}$ for $N\geq n$ can be represented as:
\begin{equation}\label{e:I/O}
	\begin{split}
		Y(N)=&\mathcal{O}_\text{N} x(0)+\mathcal{C}_\text{N} U(N)+\mathcal{C}_\text{a}U_\text{a}(N) + \mathcal{D}_\text{a}Y_\text{a}(N),
	\end{split}
\end{equation}
where $Y(N)=[y(0)^\top, y(1)^\top, \dots , y(N-1)^\top]^\top$ represents the output of the I/O model. Other vectors, i.e. $U(N)$ for inputs, $U_\text{a}(N)$ for actuator attack signals, and $Y_\text{a}(N)$ for sensor attack signals, are defined in a similar way. Moreover, $\mathcal{D}_\text{a}=I_N \otimes {D}_\text{a}$.

The matrices $\mathcal{O}_\text{N}$, $\mathcal{C}_\text{N}$, and $\mathcal{C}_\text{a}$ are structured as follows:
\begin{align}\label{e:obs_markov}
	\mathcal{O}_\text{N}=& \begin{bmatrix}
		C \\
		CA \\
		\vdots \\
		CA^{N-1}
	\end{bmatrix}, \, \mathcal{C}_\text{N}=\begin{bmatrix}
		0 & 0 & \cdots & 0 \\
		CB & 0 & \cdots & 0 \\
		\vdots & \vdots & \ddots & \vdots \\
		CA^{N-2}B & CA^{N-3}B & \cdots & 0
	\end{bmatrix}, \\
	\mathcal{C}_\text{a}=& \begin{bmatrix}
		0 & 0 & \cdots & 0 \\
		CB_\text{a} & 0 & \cdots & 0 \\
		\vdots & \vdots & \ddots & \vdots \\
		CA^{N-2}B_\text{a} & CA^{N-3}B_\text{a} & \cdots & 0
	\end{bmatrix}.
\end{align}

Let $\mathcal{Y}(x(0),U(N),\check{U}_\text{a}(N))$ denote the output of the I/O model in \eqref{e:I/O} over the time window $\{0,1,\dots,N-~1\}$. This function depends on the initial state $x(0)$, the vector of control inputs $U(N)$, and the vector of attack signals denoted by $\check{U}_\text{a}(N)=[U_\text{a}(N)^\top , Y_\text{a}(N)^\top]^\top$.

\subsection{Objectives}\label{ss:obj}
We have two objectives in this paper. Our first objective is to investigate and study necessary and sufficient conditions in terms of disruption resources under which adversaries are capable of performing covert cyber-attacks. These conditions determine input and output communication channels that should be attacked by adversaries to execute covert attacks in the CPS. As for our second objective, we develop and propose a dynamic coding scheme as a countermeasure against covert attacks that could be utilized by the CPS operators. Hence, in presence of the proposed dynamic coding scheme, if CPS operators secure $1$ input and $2$ output communication channels, adversaries will not be capable of performing covert cyber-attacks in the CPS.

\section{Covert Cyber-Attacks}
According to \cite{adaii}, in the case of covert attacks, adversaries compromise both input and output communication channels of the CPS and design their attack signals $a_\text{u}(k)\neq 0$ and $a_\text{y}(k)\neq 0$ such that the impact of actuator attacks cannot be observed in the transmitted sensor measurements to the C\&C side of the CPS. In the following, a definition for covert cyber-attacks is given in terms of the I/O model of the CPS \eqref{e:I/O}.
\begin{definition}[Covert Cyber-Attacks]\label{def:covert_IO}
	The attack signal $\check{U}_\text{a}(N)=[U_\text{a}(N)^\top , Y_\text{a}(N)^\top]^\top$ in the I/O model of the CPS \eqref{e:I/O} is designated as a covert cyber-attack if for $U_\text{a}(N)\neq0$ and $Y_\text{a}(N)\neq0$, one has $\mathcal{Y}(x(0),U(N),\check{U}_\text{a}(N))=\mathcal{Y}(x(0),U(N),0)$, $\forall N \geq 1$ for any $x(0)$ and $U(N)$, i.e., $\mathcal{C}_\text{a}U_\text{a}(N) + \mathcal{D}_\text{a}Y_\text{a}(N)= 0$.
\end{definition}

Since in Definition~\ref{def:covert_IO}, it is considered that $a_\text{u}(k)$ and $a_\text{y}(k)$ are nonzero, in the case of covert attacks, there should exist at least one compromised actuator and one compromised sensor in the CPS. In order to further investigate covert cyber-attacks, we need to define a relative degree of the CPS \eqref{e:CPS_attacked}.

\begin{definition}\label{def:relative_deg}
	Consider the CPS \eqref{e:CPS_attacked}. Let $C_q$ denote the $q$-th row of $C$, for $q = 1, \dots, p$. The relative degree of the $q$-th output with respect to $a_{\text{u}}(k)$ is $r_{\text{a}}^q$ if $C_q A^{i_q-1}B_{\text{a}} = 0$, $\forall i_q = 1, \dots, r_{\text{a}}^q-1$, and $C_q A^{r_{\text{a}}^q-1}B_{\text{a}} \neq 0$. If ~$C_q A^{i_q-1}B_{\text{a}} =~0$ for any positive integer $i_q$, the relative degree of the $q$-th output with respect to $a_{\text{u}}(k)$ is undefined and we set $r_{\text{a}}^q=\infty$. Moreover, one has $r_\text{a}=\text{min}\{r_\text{a}^1,\dots,r_\text{a}^p\}$.
\end{definition}

The relative degree $r_\text{a}^q$ quantifies the delay before the actuator attack signal $a_\text{u}(k)$ affects the $q$-th output. If the relative degree is undefined, i.e., $r_{\text{a}}^q=\infty$, it implies that the attack signal does not influence the $q$-th output at any finite time, which implies that the attack is entirely decoupled from that particular output.

We are now in a position to provide necessary and sufficient conditions under which covert cyber-attacks in the sense of Definition~\ref{def:covert_IO} can be carried out in the I/O model of the CPS \eqref{e:I/O}. 

\begin{theorem}\label{th:covert}
	Given \underline{any actuator cyber-attack signal} $a_\text{u}(k)\in \mathbb{R}^{m_\text{a}}$, a covert attack in the sense of Definition~\ref{def:covert_IO} can be executed in the CPS \eqref{e:I/O} if and only if relative degrees for corresponding outputs of triples $(C,A,B_\text{a})$ and $(D_\text{a}^* C,A,B_\text{a})$ are either equal or cannot be defined as per Definition~\ref{def:relative_deg}, where $D_\text{a}^*=D_\text{a}D_\text{a}^\top$, i.e., $C_{q}A^{r_\text{a}^{q}-1}B_\text{a}=(D_\text{a}^*C)_{q}A^{r_\text{a}^{q}-1}B_\text{a}$, for $q=1,\dots,p$, where $(D_\text{a}^*C)_{q}$ denotes the $q$-th row of $D_\text{a}^*C$. 
\end{theorem}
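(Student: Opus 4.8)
The plan is to translate the covertness requirement of Definition~\ref{def:covert_IO}, namely $\mathcal{C}_\text{a}U_\text{a}(N) + \mathcal{D}_\text{a}Y_\text{a}(N)=0$, into a condition on the Markov blocks $CA^{i}B_\text{a}$ and then read off the relative-degree statement. First I would pin down the geometric meaning of $D_\text{a}^*=D_\text{a}D_\text{a}^\top$. Since the columns of $D_\text{a}$ are distinct standard basis vectors, $D_\text{a}^\top D_\text{a}=I_{p_\text{a}}$ and $D_\text{a}^*$ is the orthogonal projection of $\mathbb{R}^p$ onto the compromised output coordinates; hence $(D_\text{a}^*C)_q=C_q$ when $q$ is a compromised sensor and $(D_\text{a}^*C)_q=0$ otherwise. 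The complementary projection $I_p-D_\text{a}^*$ annihilates exactly the compromised coordinates, so the range of $\mathcal{D}_\text{a}=I_N\otimes D_\text{a}$ is precisely $\{v:(I_N\otimes(I_p-D_\text{a}^*))v=0\}$.

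Next, because the sensor attack enters only through $\mathcal{D}_\text{a}Y_\text{a}(N)$, for a fixed $a_\text{u}$ a covert $Y_\text{a}(N)$ exists if and only if $\mathcal{C}_\text{a}U_\text{a}(N)$ lies in the range of $\mathcal{D}_\text{a}$, i.e. $(I_N\otimes(I_p-D_\text{a}^*))\mathcal{C}_\text{a}U_\text{a}(N)=0$. The operator on the left is block lower-triangular Toeplitz with the same structure as $\mathcal{C}_\text{a}$ but with $C$ replaced by $(I_p-D_\text{a}^*)C$, so its blocks are $(I_p-D_\text{a}^*)CA^{i}B_\text{a}$. Demanding covertness for \emph{every} actuator attack $a_\text{u}$ (in particular impulses applied at each time instant) forces every such block to vanish, i.e. $(I_p-D_\text{a}^*)CA^{i}B_\text{a}=0$ for all $i\geq 0$; conversely, if these blocks all vanish then the block-wise choice $y_\text{a}(k)=-D_\text{a}^\top[\mathcal{C}_\text{a}U_\text{a}(N)]_k$ cancels the actuator effect, since $D_\text{a}^*[\mathcal{C}_\text{a}U_\text{a}(N)]_k=[\mathcal{C}_\text{a}U_\text{a}(N)]_k$ on the compromised support. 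Thus covertness for all $a_\text{u}$ is equivalent to $(I_p-D_\text{a}^*)CA^{i}B_\text{a}=0$ for all $i$.

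Finally I would convert $(I_p-D_\text{a}^*)CA^{i}B_\text{a}=0$ (all $i$) into the relative-degree claim. Read row by row, it states $C_qA^{i}B_\text{a}=0$ for all $i$ whenever $q$ is a non-compromised output, meaning every non-compromised output is fully decoupled from $a_\text{u}$ and has undefined relative degree in both triples $(C,A,B_\text{a})$ and $(D_\text{a}^*C,A,B_\text{a})$; for compromised outputs the identity $(D_\text{a}^*C)_q=C_q$ makes the two triples coincide, so their relative degrees are equal and $C_qA^{r_\text{a}^q-1}B_\text{a}=(D_\text{a}^*C)_qA^{r_\text{a}^q-1}B_\text{a}$ holds trivially. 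This is exactly the equal-or-undefined condition of the statement. I expect the main obstacle to be handling the quantifier over all $a_\text{u}$ cleanly — arguing that impulse inputs at each instant suffice to drive each Markov block to zero — together with keeping the degenerate cases consistent with Definition~\ref{def:covert_IO}, namely verifying that whenever $a_\text{u}$ actually excites a compromised channel the resulting $Y_\text{a}(N)$ is nonzero.
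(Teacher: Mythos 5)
Your proposal is correct and follows essentially the same route as the paper: the paper likewise reduces covertness to the vanishing of the projected Markov blocks $(I_p-D_\text{a}^*)CA^{i}B_\text{a}$ (its equation for the $j$-th covertness condition is exactly your block-Toeplitz operator), uses the arbitrariness of $a_\text{u}$ to force each block to zero for necessity, and exhibits the same sensor-attack signal $a_\text{y}(j)=-\sum_{\gamma}D_\text{a}^\top CA^{r_\text{a}+\gamma-1}B_\text{a}a_\text{u}(j-\gamma-r_\text{a})$ for sufficiency. Your explicit justification that $D_\text{a}^*$ is the orthogonal projection onto the compromised coordinates, and the resulting case split translating $(I_p-D_\text{a}^*)CA^iB_\text{a}=0$ into the equal-or-undefined relative-degree condition, is in fact spelled out more carefully than in the paper's proof, which asserts that equivalence implicitly.
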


\begin{proof}
	It follows from Definition~\ref{def:covert_IO} that in the case of covert attacks, adversaries should design the actuator attack signal $a_\text{u}(k)$ and the sensor attack signal $a_\text{y}(k)$ such that the following conditions hold:
	\begin{subequations}\label{e:th_covert:c}
		\begin{align}
			& CA^{r_\text{a}-1}B_\text{a}a_\text{u}(0)+D_\text{a}a_\text{y}(r_\text{a})=0, \label{e:th_covert:c1} \\
			& CA^{r_\text{a}}B_\text{a}a_\text{u}(0)+CA^{r_\text{a}-1}B_\text{a}a_\text{u}(1)+D_\text{a}a_\text{y}(r_\text{a}+1)=0, \label{e:th_covert:c2} \\
			& \vdots \nonumber \\
			& CA^{N-2}B_\text{a}a_\text{u}(0)+CA^{N-3}B_\text{a}a_\text{u}(1)+CA^{N-4}B_\text{a}a_\text{u}(2) \nonumber \\
			&+\cdots+CA^{r_\text{a}-1}B_\text{a}a_\text{u}(N-2)+D_\text{a}a_\text{y}(N-1)=0. \label{e:th_covert:cN}
		\end{align}
	\end{subequations}
	
	Consequently, in order to cancel out the impact of actuator attacks in \eqref{e:th_covert:c}, adversaries need to design the sensor attack signal in the following form:
	\begin{equation}\label{e:th_covert:ay}
		a_\text{y}(j)=-\sum_{\gamma=0}^{j-r_\text{a}}D_\text{a}^\top CA^{r_\text{a}+\gamma-1}B_\text{a}a_\text{u}(j-\gamma-r_\text{a}),
	\end{equation}
	for $j=r_\text{a}, \dots, N-1$, and $a_\text{y}(j)=0$ for $j=0,\dots,r_\text{a}-1$.

	\textbf{Necessary Condition:} 
	Suppose \eqref{e:th_covert:c} holds and relative degrees of triples $(C,A,B_\text{a})$ and $(D_\text{a}^* C,A,B_\text{a})$ are not equal. Given the definition of $D_\text{a}^*$, the relative degree of the $q$-th row of $D_\text{a}^*C$ with respect to the actuator attack signal $a_\text{u}(k)$ is either equal to that of $C_q$ and $a_\text{u}(k)$ or does not exist. Consider $\hat{s} \in \mathcal{S}_\text{a}$ as a sensor at which $C_{\hat{s}}A^{r_\text{a}^{\hat{s}}-1}B_\text{a} \neq 0$ and $(D_\text{a}^*C)_{\hat{s}}A^{r_\text{a}^{\hat{s}}-1}B_\text{a}=0$, where $(D_\text{a}^*C)_{\hat{s}}$ denotes the $\hat{s}$-th row of $D_\text{a}^*C$. 
	
	Considering \eqref{e:th_covert:ay}, let us rewrite the left-hand side of the $j$-th equation of \eqref{e:th_covert:c} in the following form:
	\begin{equation}\label{e:th_covert:c_j}
		[(I_p -D_\text{a}^*)CA^{r_\text{a}-1}B_\text{a} \cdots (I_p -D_\text{a}^*)CA^{r_\text{a}+j-1}B_\text{a}]\begin{bmatrix}
			a_\text{u}(j-1) \\
			\vdots \\
			a_\text{u}(0)
		\end{bmatrix},
	\end{equation}
	for $j=r_\text{a},\dots, N-1$. Consequently, \eqref{e:th_covert:c} holds if and only if all rows of \eqref{e:th_covert:c_j} are equal to zero. However, since we are considering any actuator attack signal, and $C_{\hat{s}}A^{r_\text{a}^{\hat{s}}-1}B_\text{a} \neq 0$ and $(D_\text{a}^*C)_{\hat{s}}A^{r_{\hat{s}}-1}B_\text{a}=0$, there exists at least one nonzero row in \eqref{e:th_covert:c_j}, i.e., the row that corresponds to the $\hat{s}$-th sensor, which contradicts the assumption.
	
	\textbf{Sufficient Condition:} Assume that the relative degrees of triples $(C,A,B_\text{a})$ and $(D_\text{a}^* C,A,B_\text{a})$ for all the sensors are equal. Hence, $C_qA^iB_\text{a}=(D_\text{a}^*C)_qA^iB_\text{a}$, for any $i \in \mathbb{N}$ and $\forall q \in \{1,\dots,p\}$. The latter implies that \eqref{e:th_covert:ay} is the solution to \eqref{e:th_covert:c}, and all rows in \eqref{e:th_covert:c_j} are equal to zero. This completes the proof of the theorem. 
\end{proof}

\begin{remark}\label{rem:upper_bound}
	Consider the case where the hypothesis of Theorem~\ref{th:covert} does not hold. Consequently, as per Definition~\ref{def:leftInvertible}, if the triple $((I_p-D_\text{a}^*)C,A,B_\text{a})$ \underline{is not left-invertible} (see \cite{ctfls} for more details), adversaries can \underline{design} a certain actuator attack signal $a_\text{u}(k)$ that makes \eqref{e:th_covert:c_j} equal to zero despite having $(I_p -D_\text{a}^*)CA^{r_\text{a}-1}B_\text{a}\neq~0$. The latter implies that adversaries can design an actuator attack signal such that its impact on the not compromised sensors, i.e, $I_p -D_\text{a}^*$, becomes zero, while the impact of an arbitrary actuator attack signal would show up in the not compromised sensors. However, in this case, the actuator attack signal $a_\text{u}(k)$ should be specifically designed to make \eqref{e:th_covert:c_j} equal to zero whereas in Theorem~\ref{th:covert}, if the conditions are satisfied, adversaries can perform covert attacks for any arbitrary actuator attack signal. Hence, it can be concluded that Theorem~\ref{th:covert} provides necessary and sufficient conditions for covert attacks when the actuator attack signal is arbitrary.
\end{remark}

\textbf{Example 1:} Let $A=B=C=I_2$, where $I_2=[1, \, 0;0, \; 1]$. Moreover, consider the case where both actuators and the first sensor are compromised, i.e., $B_\text{a}=B$ and $D_\text{a}=[1, \, 0]^\top$. Relative degrees for the first and the second outputs of the triple $(C,A,B_\text{a})$ are equal to $1$. Also, the relative degree corresponding to the first row of $D_\text{a}^*C$ for the triple $(D_\text{a}^*C,A,B_\text{a})$ is $1$ and that for the second row is undefined, i.e, $r_\text{a}^2=\infty$. Hence, the hypothesis of Theorem~\ref{th:covert} does not hold since $C_{q}A^{r_\text{a}^{q}-1}B_\text{a}\neq(D_\text{a}^*C)_{q}A^{r_\text{a}^{q}-1}B_\text{a}$ for $q=2$, which implies that a covert cyber-attack with an arbitrary actuator attack signal cannot be executed. However, following Remark~\ref{rem:upper_bound}, the triple $((I_p-D_\text{a}^*)C,A,B_\text{a})$ is not left-invertible according to Lemma~\ref{lem:leftInv}. Therefore, if one designs an actuator attack signal $a_u(k)=[1, \, 0]^\top$, it will only impact the first sensor, since $C_{2}A^{k}B_\text{a}=[0, \, 1]$, and one has $C_{2}A^{k}B_\text{a}a_u(k)=0$, $\forall k\geq 0$. Considering that the first sensor is compromised, adversaries can design a sensor attack signal $a_\text{y}(k)$ to remove the impact of $a_\text{u}(k)$ from the sensor measurement and perform a covert cyber-attack in the sense of Definition~\ref{def:covert_IO}.

\begin{corollary}\label{cor:covert}
	Assume that the hypothesis of Theorem~\ref{th:covert} holds. Given any $a_\text{u}(k)\in\mathbb{R}^{m_\text{a}}$, the sensor attack signal in a covert cyber-attack can be expressed by
	\begin{equation*}
		a_\text{y}(k)=-\sum_{\gamma=0}^{k-r_\text{a}}D_\text{a}^\top CA^{r_\text{a}+\gamma-1}B_\text{a}a_\text{u}(k-\gamma-r_\text{a}),
	\end{equation*}
	for $k\geq r_\text{a}$, and $a_\text{y}(k)=0$ for $k<r_\text{a}$.
\end{corollary}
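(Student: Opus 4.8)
The plan is to recognize that the stated formula is exactly the sensor attack signal \eqref{e:th_covert:ay} constructed in the sufficiency part of the proof of Theorem~\ref{th:covert}, so the corollary reduces to verifying that this signal satisfies the covertness conditions \eqref{e:th_covert:c} whenever the hypothesis of Theorem~\ref{th:covert} holds. First I would write the covertness requirement from Definition~\ref{def:covert_IO}, namely $\mathcal{C}_\text{a}U_\text{a}(N)+\mathcal{D}_\text{a}Y_\text{a}(N)=0$, in its component-wise form \eqref{e:th_covert:c}; the equation associated with $a_\text{y}(k)$ for $k\geq r_\text{a}$ demands that $D_\text{a}a_\text{y}(k)$ cancel the accumulated actuator contribution $\sum_{\gamma=0}^{k-r_\text{a}}CA^{r_\text{a}+\gamma-1}B_\text{a}a_\text{u}(k-\gamma-r_\text{a})$.

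Next I would use the structure of $D_\text{a}$: since its columns are distinct standard basis vectors, one has $D_\text{a}^\top D_\text{a}=I_{p_\text{a}}$ and $D_\text{a}D_\text{a}^\top=D_\text{a}^*$ is the orthogonal projection onto the compromised-sensor coordinates. Substituting the proposed $a_\text{y}(k)$ and left-multiplying by $D_\text{a}$ yields $D_\text{a}a_\text{y}(k)=-\sum_{\gamma=0}^{k-r_\text{a}}D_\text{a}^* CA^{r_\text{a}+\gamma-1}B_\text{a}a_\text{u}(k-\gamma-r_\text{a})$. I would also note that the definition $r_\text{a}=\min_q r_\text{a}^q$ forces $CA^{i}B_\text{a}=0$ for $i<r_\text{a}-1$, so for $k<r_\text{a}$ the actuator contribution in \eqref{e:th_covert:c} vanishes entirely and the choice $a_\text{y}(k)=0$ is automatically consistent.

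The key step, and the only place the hypothesis enters, is to replace $D_\text{a}^*CA^{i}B_\text{a}$ by $CA^{i}B_\text{a}$: under the equal-relative-degree condition of Theorem~\ref{th:covert} one has $C_qA^iB_\text{a}=(D_\text{a}^*C)_qA^iB_\text{a}$ for every $q$ and every $i\in\mathbb{N}$, hence $D_\text{a}^*CA^iB_\text{a}=CA^iB_\text{a}$, in particular for $i=r_\text{a}+\gamma-1$. With this substitution the right-hand side collapses to exactly the actuator contribution that must be cancelled, so \eqref{e:th_covert:c} holds and the constructed $a_\text{y}(k)$ is covert in the sense of Definition~\ref{def:covert_IO}. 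The main, and rather mild, obstacle is precisely this dropping of the projection $D_\text{a}^*$: without the Theorem~\ref{th:covert} hypothesis, applying $D_\text{a}^\top$ and re-expanding would leave $CA^iB_\text{a}$ projected onto the compromised sensors only, and the uncompromised rows would fail to cancel, reproducing exactly the obstruction identified in the necessity argument. Hence the corollary is, in essence, a packaging of the explicit signal already verified inside the proof of Theorem~\ref{th:covert}, now stated as a standalone closed-form expression for the covert sensor attack.
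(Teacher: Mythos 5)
Your proposal is correct and follows essentially the same route as the paper: the stated formula is exactly the signal \eqref{e:th_covert:ay} constructed in the proof of Theorem~\ref{th:covert}, and the paper's own (omitted) proof simply defers to that argument, whose sufficiency step is precisely your key observation that the hypothesis forces $C_qA^iB_\text{a}=(D_\text{a}^*C)_qA^iB_\text{a}$ for all $q$ and all $i$, so that $D_\text{a}a_\text{y}(k)=-\sum_{\gamma=0}^{k-r_\text{a}}D_\text{a}^*CA^{r_\text{a}+\gamma-1}B_\text{a}a_\text{u}(k-\gamma-r_\text{a})$ collapses to the negative of the accumulated actuator contribution. Your additional checks (that $D_\text{a}D_\text{a}^\top=D_\text{a}^*$ by the standard-basis structure of $D_\text{a}$, and that $r_\text{a}=\min_q r_\text{a}^q$ makes $a_\text{y}(k)=0$ consistent for $k<r_\text{a}$) are exactly the details the paper leaves implicit.
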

\begin{proof}
	The proof follows along	similar lines to that of Theorem~\ref{th:covert} and is omitted for the sake of brevity.
\end{proof}

The conditions derived in Theorem~\ref{th:covert} and Corollary~\ref{cor:covert} for the CPS \eqref{e:CPS_attacked} rely on certain elements within $\mathcal{C}_\text{N}$, namely, the Markov parameters, which are defined by \eqref{e:obs_markov}. Therefore, through the application of methodologies in \cite{hajdasinski1979realization,de1988geometrical,ljung1999system,dong2011identification}, and by utilizing results presented in Theorem~\ref{th:covert} and Corollary~\ref{cor:covert}, one can investigate the vulnerability of the CPS to covert cyber-attacks by employing a data-driven approach.

\section{The Coding Scheme}\label{s:coding}
In this section, a dynamic coding scheme on the input communication channels is developed and proposed which can be used to prevent adversaries from performing covert attacks. The coding scheme is designed such that having only one secure input and two secure output communication channels will result in preventing adversaries from executing covert cyber-attacks.

\begin{remark}
	It should be noted that the dynamic coding scheme can also be implemented on the output	communication channels. However, the design steps and required assumptions will be similar to the case of having the coding scheme on the input communication channels since the location of the coding scheme does not provide one with extra mathematical redundancy. Also, having coding schemes on both input and output communication channels would provide one with a more complicated design problem which may not necessarily lead to having more relaxed assumptions.
\end{remark}

\begin{figure}[!t]
	\centering
	\centerline{\includegraphics[width=1.2\columnwidth]{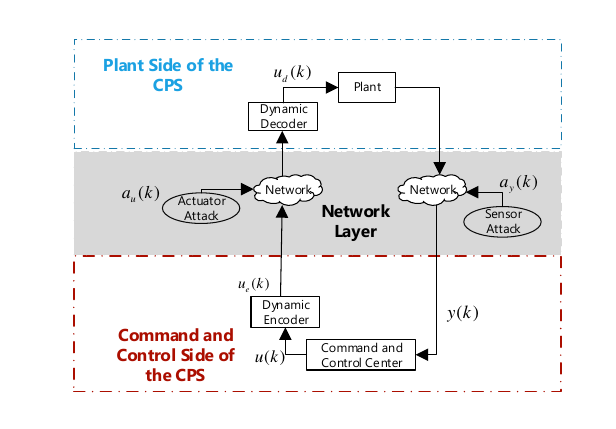}}
	\caption{The dynamic coding scheme, where $u_\text{e}(k)$ is the output of the encoder and $u_\text{d}(k)$ is the output of the decoder.}\label{fig:cps_coding}
\end{figure}

On the Command and Control (C\&C) side of the CPS, an encoder, represented as $\mathcal{E}$, and on the plant side, a decoder, denoted as $\mathcal{D}$, have been designed. The CPS and the dynamic coding scheme are depicted in Fig.~\ref{fig:cps_coding}. Dynamics of the encoder and the decoder which are located on the input communication channels of the CPS, are expressed by
\begin{align}
	\mathcal{E}: & \begin{dcases} \label{e:en}
		x_\text{e}(k+1) &= A_\text{e}x_\text{e}(k)+B_\text{e}u(k), \\
		u_\text{e}(k) &= C_\text{e}x_\text{e}(k)+D_\text{e}u(k),
	\end{dcases} \\
	\mathcal{D}: & \begin{dcases} \label{e:de}
		x_\text{d}(k+1) &= A_\text{d}x_\text{d}(k)+B_\text{d}(u_\text{e}(k)+L_\text{a}a_\text{u}(k)), \\
		u_\text{d}(k) &= C_\text{d}x_\text{d}(k)+D_\text{d}(u_\text{e}(k)+L_\text{a}a_\text{u}(k)),
	\end{dcases}
\end{align}
where $x_\text{e}(k), x_\text{d}(k) \in \mathbb{R}^m$ and $u_\text{e}(k), u_\text{d}(k) \in \mathbb{R}^m$ represent the states and outputs of the encoder $\mathcal{E}$ and the decoder $\mathcal{D}$, respectively. It is important to note that the initial conditions are $x_\text{e}(0)=x_\text{d}(0)=0$. The subsequent lemma lays down the necessary and sufficient conditions that establish the decoder $\mathcal{D}$ as the inverse of $\mathcal{E}$, ensuring that when $a_\text{u}(k)=0$, we have $u_\text{d}(k)=u(k)$ for all $k \geq 0$.

\begin{lem}[\cite{ferrari2020switching}]\label{lem:en_de}
	Let $a_\text{u}(k)=0$. One has $u_\text{d}(k)=u(k)$, $\forall k \geq 0$, if and only if there exists an invertible matrix $T$ satisfying the following conditions:
	\begin{align*}
		D_\text{d}C_\text{e}+C_\text{d}T=0, \, T^{-1}B_\text{d}D_\text{e}=B_\text{e}, \, D_\text{d}=D_\text{e}^{-1}, \\
		T^{-1}A_\text{d}T+T^{-1}B_\text{d}C_\text{e}=T^{-1}A_\text{d}T-B_\text{e}C_\text{d}T=A_\text{e}.
	\end{align*}
\end{lem}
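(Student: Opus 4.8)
The plan is to treat the series interconnection of $\mathcal{E}$ followed by $\mathcal{D}$ (with $a_\text{u}(k)=0$) as a single composite system and to require its input--output map from $u$ to $u_\text{d}$ to be the identity. Substituting $u_\text{e}(k)=C_\text{e}x_\text{e}(k)+D_\text{e}u(k)$ from \eqref{e:en} into \eqref{e:de} gives a cascade realization with state $\xi(k)=[x_\text{e}(k)^\top,x_\text{d}(k)^\top]^\top$, namely $\xi(k+1)=\mathcal{A}\xi(k)+\mathcal{B}u(k)$ and $u_\text{d}(k)=\mathcal{C}\xi(k)+\mathcal{D}u(k)$, where $\mathcal{A}=\left[\begin{smallmatrix} A_\text{e} & 0 \\ B_\text{d}C_\text{e} & A_\text{d}\end{smallmatrix}\right]$, $\mathcal{B}=\left[\begin{smallmatrix}B_\text{e}\\ B_\text{d}D_\text{e}\end{smallmatrix}\right]$, $\mathcal{C}=[\,D_\text{d}C_\text{e}\ \ C_\text{d}\,]$, and $\mathcal{D}=D_\text{d}D_\text{e}$. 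Because $x_\text{e}(0)=x_\text{d}(0)=0$, the requirement $u_\text{d}(k)=u(k)$ for all $k\geq0$ and all inputs is equivalent to matching the Markov parameters of the identity map, i.e. $\mathcal{D}=I$ (forcing $D_\text{e}$ invertible and $D_\text{d}=D_\text{e}^{-1}$) together with $\mathcal{C}\mathcal{A}^{i}\mathcal{B}=0$ for every $i\geq0$; equivalently, the transfer matrices $G_\text{e}$ and $G_\text{d}$ satisfy $G_\text{d}(z)G_\text{e}(z)=I$.

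For the sufficiency direction I would posit the state relation $x_\text{d}(k)=Tx_\text{e}(k)$ and verify it by induction on $k$. The base case holds since both states start at zero. For the inductive step, substituting $x_\text{d}(k)=Tx_\text{e}(k)$ into the decoder update and using the hypotheses $T^{-1}A_\text{d}T+T^{-1}B_\text{d}C_\text{e}=A_\text{e}$ and $T^{-1}B_\text{d}D_\text{e}=B_\text{e}$ yields $x_\text{d}(k+1)=(A_\text{d}T+B_\text{d}C_\text{e})x_\text{e}(k)+B_\text{d}D_\text{e}u(k)=TA_\text{e}x_\text{e}(k)+TB_\text{e}u(k)=Tx_\text{e}(k+1)$, which closes the induction. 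Substituting the relation into the decoder output and using $D_\text{d}C_\text{e}+C_\text{d}T=0$ and $D_\text{d}=D_\text{e}^{-1}$ collapses the state-dependent term and leaves $u_\text{d}(k)=D_\text{d}D_\text{e}u(k)=u(k)$, as required.

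The necessity direction is the crux. Evaluating at $k=0$ forces $D_\text{d}D_\text{e}=I$ immediately. The remaining task is to produce the invertible $T$. My approach is to exhibit the canonical state-space inverse of $\mathcal{E}$: since $D_\text{e}$ is invertible, solving $u_\text{e}=C_\text{e}x_\text{e}+D_\text{e}u$ for $u$ and substituting into \eqref{e:en} realizes $G_\text{e}^{-1}$ on the same state $x_\text{e}$, with matrices $\hat A=A_\text{e}-B_\text{e}D_\text{e}^{-1}C_\text{e}$, $\hat B=B_\text{e}D_\text{e}^{-1}$, $\hat C=-D_\text{e}^{-1}C_\text{e}$, $\hat D=D_\text{e}^{-1}$. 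Under $u_\text{d}=u$ the decoder $(A_\text{d},B_\text{d},C_\text{d},D_\text{d})$ realizes the very same transfer matrix $G_\text{e}^{-1}$, and both realizations share the state dimension $m$. Invoking uniqueness-up-to-similarity of minimal realizations of the same transfer matrix, there exists an invertible $T$ with $x_\text{d}=Tx_\text{e}$, i.e. $A_\text{d}=T\hat A T^{-1}$, $B_\text{d}=T\hat B$, $C_\text{d}=\hat C T^{-1}$, $D_\text{d}=\hat D$; rewriting these four identities reproduces exactly the four conditions of the lemma, with the alternative form of the $A_\text{e}$-equation following from $T^{-1}B_\text{d}=B_\text{e}D_\text{d}$ and $C_\text{d}T=-D_\text{d}C_\text{e}$.

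I expect the main obstacle to be precisely the existence and invertibility of $T$ in the necessity part: it hinges on both realizations being minimal, so the argument must either assume or establish reachability and observability of $\mathcal{E}$ and $\mathcal{D}$. Absent minimality, equality of the transfer matrices does not force a single coordinate map $x_\text{d}=Tx_\text{e}$ (redundant unobservable or unreachable modes could break the bijection), so this is the point that will need the most care, whereas the sufficiency direction is a routine induction.
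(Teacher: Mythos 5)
The paper itself offers no proof of this lemma: it is imported verbatim, citation and all, from \cite{ferrari2020switching}, so there is no in-paper argument to compare against and your proposal must stand on its own. Your sufficiency direction does: positing $x_\text{d}(k)=Tx_\text{e}(k)$, verifying it by induction using $T^{-1}A_\text{d}T+T^{-1}B_\text{d}C_\text{e}=A_\text{e}$ and $T^{-1}B_\text{d}D_\text{e}=B_\text{e}$, and collapsing the output via $D_\text{d}C_\text{e}+C_\text{d}T=0$ and $D_\text{d}=D_\text{e}^{-1}$ is exactly the right (and standard) computation, and your reduction of $u_\text{d}\equiv u$ to $D_\text{d}D_\text{e}=I$ plus vanishing Markov parameters of the cascade is also correct given the zero initial conditions.

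The necessity direction, however, has a genuine gap, and you have located it yourself without closing it: the state-space isomorphism theorem you invoke requires \emph{minimality} of both realizations, and sharing the state dimension $m$ is no substitute. The gap is not a technicality that more care dissolves --- without minimality the necessity claim is simply false as stated. Concretely, take $C_\text{e}=0$, $D_\text{e}=I_m$, $A_\text{e}=0$ and $C_\text{d}=0$, $D_\text{d}=I_m$, $A_\text{d}=I_m$, with $B_\text{e}$, $B_\text{d}$ arbitrary. Then $u_\text{e}(k)=u(k)$ and $u_\text{d}(k)=u_\text{e}(k)=u(k)$ for all $k\geq 0$, yet the condition $T^{-1}A_\text{d}T+T^{-1}B_\text{d}C_\text{e}=A_\text{e}$ reduces to $T^{-1}T=0$, which no invertible $T$ satisfies: both realizations of the constant transfer matrix $I_m$ are unobservable, and their $A$-matrices are not similar. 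So any complete proof must either add a minimality hypothesis on $(A_\text{e},B_\text{e},C_\text{e},D_\text{e})$ and $(A_\text{d},B_\text{d},C_\text{d},D_\text{d})$ (after which your argument via the canonical inverse realization $\hat{A}=A_\text{e}-B_\text{e}D_\text{e}^{-1}C_\text{e}$, $\hat{B}=B_\text{e}D_\text{e}^{-1}$, $\hat{C}=-D_\text{e}^{-1}C_\text{e}$, $\hat{D}=D_\text{e}^{-1}$ goes through, and your translation of the similarity identities into the lemma's four conditions checks out algebraically), or else read the lemma as what this paper actually uses it for --- a set of \emph{design} conditions whose sufficiency is all that Assumption~1 requires.
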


The following assumption holds throughout the paper.

\begin{assumption}\label{assum:en_de}
	The encoder \eqref{e:en} and the decoder \eqref{e:de} are constructed based on Lemma~\ref{lem:en_de}. Additionally, it is assumed that adversaries have knowledge on the parameters of both the encoder $\mathcal{E}$ and the decoder $\mathcal{D}$.
\end{assumption}

In presence of $\mathcal{E}$ and $\mathcal{D}$, the dynamics of the CPS \eqref{e:CPS_attacked} can be described as follows:
\begin{equation}\label{e:CPS_coding}
	\begin{split}
		x(k+1)=&Ax(k)+Bu_\text{d}(k)+\omega(k), \\
		y(k)=& Cx(k)+D_\text{a}a_\text{y}(k)+\nu(k).
	\end{split}
\end{equation}

Consequently, the I/O model of the CPS \eqref{e:CPS_coding} takes the following form:
\begin{equation}\label{e:I/O_coding}
	\begin{split}
		Y(N)=&\mathcal{O}_\text{N} x(0)+\mathcal{C}_\text{N} U(N)+\mathcal{C}_\text{N} \mathcal{C}_\text{d}U_\text{a}(N) + \mathcal{D}_\text{a}Y_\text{a}(N) \\
		&\mathcal{C}_\omega W(N)+V(N),
	\end{split}
\end{equation}
where
\begin{equation*}
	\mathcal{C}_\text{d}= \begin{bmatrix}
		D_\text{d}L_\text{a} & 0 & \cdots & 0 \\
		C_\text{d}B_\text{d}L_\text{a} & D_\text{d}L_\text{a} & \cdots & 0 \\
		\vdots & \vdots & \ddots & \vdots \\
		C_\text{d}A^{N-2}_\text{d}B_\text{d}L_\text{a} & C_\text{d}A^{N-3}_\text{d}B_\text{d}L_\text{a} & \cdots  & D_\text{d}L_\text{a}
	\end{bmatrix}.
\end{equation*}

The presence of the coding scheme leads to the appearance of actuator attack signal impact as $\mathcal{C}_\text{N} \mathcal{C}_\text{d}U_\text{a}(N)$ in the sensor measurements. Given the existence of the coding scheme in \eqref{e:CPS_coding}, we need to redefine the relative degree of our system.

\begin{definition}\label{def:relative_deg_coding}
	The relative degree of the $q$-th output of the CPS \eqref{e:CPS_coding} with respect to the actuator attack signal $a_\text{u}(k)$ is $r_\text{d}^q$ if $C_q A^{i_q}BD_\text{d}L_\text{a} = 0$ for all $i_q<r_\text{d}^q -1$ and $C_q A^{r_\text{d}^q-1}BD_\text{d}L_\text{a} \neq 0$, for $q=1,\dots,p$, where $C_q$ represents the $q$-th row of $C$. If for any positive integer $i_q$ one has $C_q A^{i_q}BD_\text{d}L_\text{a} = 0$, the relative degree for the $q$-th output with respect to $a_\text{u}(k)$ cannot be defined and we set it to $\infty$.  Also, we define $r_\text{d}=\text{min}\{r_\text{d}^1,\dots,r_\text{d}^p\}$.
\end{definition}

The following assumption holds throughout this section.
\begin{assumption}\label{assum:secureChannels}
	For the CPS \eqref{e:CPS_coding}, there exist at least one secure input communication channel and two secure output communication channels, i.e., $\text{rank}(L_\text{a})< m$ and $\text{rank}(D_\text{a})<p-1$.
\end{assumption}

\subsection{Design Specifications of the Dynamic Coding Scheme for Securing the CPS Against Covert Cyber-Attacks}
In presence of the coding scheme and in case of covert attacks, adversaries may try to use sensor attack signals to eliminate the impact of actuator attack signals from output measurements. Hence, the proposed coding scheme should satisfy two requirements. First, when there exists a secure input communication channel, the impact of the coding scheme cannot be canceled out through actuator attack signals. Second, the coding matrices should be designed such that actuator attack signals will affect all the sensor measurements. Under the latter condition, adversaries need to have access to all output communication channels to eliminate the impact of their actuator cyber-attacks from sensor measurements. Design specifications on $\mathcal{E}$ and $\mathcal{D}$ are provided in the following theorem.

\begin{theorem}\label{th:coding_cov}
	Under Assumption~\ref{assum:secureChannels}, consider $q_\text{s1}$ and $q_\text{s2}$ as two secure output communication channels. For the CPS \eqref{e:CPS_coding}, adversaries cannot perform covert cyber-attacks in the sense of Definition~\ref{def:covert_IO} if the following conditions are satisfied: 
	\begin{enumerate}
		\item $C_{q_\text{s1}}A^{r_\text{d}-1}BD_\text{d}L_\text{a}=0$,
		\item $\text{ker}(C_{q_\text{s2}}A^{r_\text{d}-1}BD_\text{d}L_\text{a})\cap\text{ker}(C_{q_\text{s1}}A^{r_\text{d}}BD_\text{d}L_\text{a})=0$,
		\item $C_{q_\text{s1}}A^{r_\text{d}-1}BC_\text{d}B_\text{d}L_\text{a}=0$.
	\end{enumerate}
\end{theorem}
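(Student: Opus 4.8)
The plan is to argue by contradiction: suppose a covert attack in the sense of Definition~\ref{def:covert_IO} exists for the coded system, so that $\mathcal{C}_\text{N}\mathcal{C}_\text{d}U_\text{a}(N)+\mathcal{D}_\text{a}Y_\text{a}(N)=0$ with $U_\text{a}(N)\neq 0$, and show that conditions (1)--(3) force $U_\text{a}(N)=0$, a contradiction. The central first observation is that the two channels $q_\text{s1}$ and $q_\text{s2}$ are secure, so the corresponding rows of $\mathcal{D}_\text{a}$ vanish and the sensor attack cannot touch these two measurements. Restricting the covert equality to the rows associated with $q_\text{s1}$ and $q_\text{s2}$ therefore eliminates the $\mathcal{D}_\text{a}Y_\text{a}(N)$ term entirely and yields, for every $k$, $\sum_{i\geq 1}(\mathcal{C}_\text{N}\mathcal{C}_\text{d})_i^{(q)}a_\text{u}(k-i)=0$ for $q\in\{q_\text{s1},q_\text{s2}\}$; that is, the actuator attack must be rendered \emph{invisible} at both secure sensors on its own. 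The whole argument then reduces to proving that the only attack sequence invisible at these two sensors is the trivial one.

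Next I would make the block structure of the effective signature $\mathcal{C}_\text{N}\mathcal{C}_\text{d}$ explicit. Multiplying the Toeplitz matrix $\mathcal{C}_\text{N}$, with blocks $CA^{i-1}B$, by $\mathcal{C}_\text{d}$, with diagonal blocks $D_\text{d}L_\text{a}$ and sub-diagonal blocks $C_\text{d}A_\text{d}^{i-1}B_\text{d}L_\text{a}$, shows that its $i$-th block equals $CA^{i-1}BD_\text{d}L_\text{a}$ plus decoder cross-terms of the type $CA^{l}BC_\text{d}A_\text{d}^{\,i-2-l}B_\text{d}L_\text{a}$. By Definition~\ref{def:relative_deg_coding} and $r_\text{d}=\min_q r_\text{d}^q$, the leading direct term $CA^{r_\text{d}-1}BD_\text{d}L_\text{a}$ is the first nonzero plant/decoder feedthrough, so the earliest informative constraints at the secure sensors occur around times $r_\text{d}$ and $r_\text{d}+1$. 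I would then read off the covert equalities precisely at those times.

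The heart of the argument is to use the three conditions to pin down $a_\text{u}(0)$. Condition~(1), $C_{q_\text{s1}}A^{r_\text{d}-1}BD_\text{d}L_\text{a}=0$, makes $q_\text{s1}$ blind to the direct contribution of $a_\text{u}(0)$ at time $r_\text{d}$, so its first appearance there is deferred to $r_\text{d}+1$; condition~(3), $C_{q_\text{s1}}A^{r_\text{d}-1}BC_\text{d}B_\text{d}L_\text{a}=0$, simultaneously annihilates the leading decoder cross-term that would otherwise feed $a_\text{u}(0)$ into $q_\text{s1}$ at that next step, so the time-$(r_\text{d}+1)$ equality at $q_\text{s1}$ collapses to $C_{q_\text{s1}}A^{r_\text{d}}BD_\text{d}L_\text{a}\,a_\text{u}(0)=0$. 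In parallel, the time-$r_\text{d}$ equality at $q_\text{s2}$ gives $C_{q_\text{s2}}A^{r_\text{d}-1}BD_\text{d}L_\text{a}\,a_\text{u}(0)=0$. Hence $a_\text{u}(0)\in\ker(C_{q_\text{s2}}A^{r_\text{d}-1}BD_\text{d}L_\text{a})\cap\ker(C_{q_\text{s1}}A^{r_\text{d}}BD_\text{d}L_\text{a})$, which by condition~(2) is trivial, forcing $a_\text{u}(0)=0$. The block-Toeplitz, shift-invariant structure of $\mathcal{C}_\text{N}\mathcal{C}_\text{d}$ then lets me repeat the argument one step later: with $a_\text{u}(0)=0$ the same two equalities isolate $a_\text{u}(1)$ through the identical leading coefficients, giving $a_\text{u}(1)=0$, and inductively $a_\text{u}(k)=0$ for all $k$. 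This contradicts $U_\text{a}(N)\neq 0$, so no covert attack can exist.

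The step I expect to be the main obstacle is the bookkeeping of the decoder-induced cross-terms in $\mathcal{C}_\text{N}\mathcal{C}_\text{d}$: unlike the signature matrix $\mathcal{C}_\text{a}$ of the uncoded analysis, the coded signature mixes the plant Markov parameters $CA^{l}B$ with the decoder response $C_\text{d}A_\text{d}^{\,j}B_\text{d}L_\text{a}$, and one must verify that at the critical times $r_\text{d}$ and $r_\text{d}+1$ and at the two secure sensors, every cross-term other than those removed by conditions~(1) and~(3) either vanishes through the relative-degree structure of Definition~\ref{def:relative_deg_coding} or multiplies an already-nullified input during the induction. Confirming that conditions~(1)--(3) are exactly what is required to reduce the surviving equalities to the two kernels appearing in condition~(2), and that shift-invariance genuinely closes the induction, is where the care lies; everything else follows once the covert equality has been restricted to the two secure rows.
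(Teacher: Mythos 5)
Your proposal follows essentially the same route as the paper's proof: you restrict the covert-attack equality $\mathcal{C}_\text{N}\mathcal{C}_\text{d}U_\text{a}(N)+\mathcal{D}_\text{a}Y_\text{a}(N)=0$ to the secure rows $q_\text{s1}$ and $q_\text{s2}$ at times $r_\text{d}$ and $r_\text{d}+1$, use conditions (1) and (3) to collapse the $q_\text{s1}$ equation to $C_{q_\text{s1}}A^{r_\text{d}}BD_\text{d}L_\text{a}\,a_\text{u}(0)=0$, and invoke the trivial kernel intersection of condition (2) to force $a_\text{u}(0)=0$, exactly as in the paper's treatment of \eqref{e:th_coding:c1}--\eqref{e:th_coding:c2}. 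Your explicit shift-invariance induction eliminating all later samples $a_\text{u}(k)$, and your flagged bookkeeping of the decoder cross-terms in $\mathcal{C}_\text{N}\mathcal{C}_\text{d}$, merely make explicit steps the paper leaves implicit, so the argument is correct and the same in substance.
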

\begin{proof}
	Adversaries need to design their attack signals such that $\mathcal{C}_\text{N} \mathcal{C}_\text{d}U_\text{a}(N) + \mathcal{D}_\text{a}Y_\text{a}(N)=0$. Hence, at the $r_\text{d}$-th and $r_\text{d}+1$-th instances of the output, actuator and sensor attack signals should satisfy the following:
	\begin{subequations}\label{e:th_coding:c}
		\begin{align}
			& CA^{r_\text{d}-1}BD_\text{d}L_\text{a}a_\text{u}(0)+D_\text{a}a_\text{y}(r_\text{d})=0, \label{e:th_coding:c1} \\
			&CA^{r_\text{d}}BD_\text{d}L_\text{a}a_\text{u}(0)+CA^{r_\text{d}-1}B C_\text{d}B_\text{d}L_\text{a}a_\text{u}(0) \nonumber \\
			&+CA^{r_\text{d}-1}BD_\text{d}L_\text{a}a_\text{u}(1)+D_\text{a}a_\text{y}(r_\text{d}+1)=0. \label{e:th_coding:c2}  
		\end{align}
	\end{subequations}
	
	Under Assumption~\ref{assum:secureChannels}, measurements that are transmitted through the communication channels $q_\text{s1}$ and $q_\text{s2}$ cannot be manipulated by means of sensor attacks. Moreover since in Condition 1) we have $C_{q_\text{s1}}A^{r_\text{d}-1}BD_\text{d}L_\text{a}=0$, in order to satisfy \eqref{e:th_coding:c1}, the actuator attack signal should be designed such that $a_\text{u}(0) \in \text{ker}(C_{q_\text{s2}}A^{r_\text{d}-1}BD_\text{d}L_\text{a})$. Furthermore, it follows from Condition 2) that $C_{q_\text{s1}}A^{r_\text{d}}BD_\text{d}L_\text{a}a_\text{u}(0)\neq0$. Since $q_\text{s1}$ is a secure output communication channel and as per Conditions 1) and 3) we have $C_{q_\text{s1}}A^{r_\text{d}-1}BD_\text{d}L_\text{a}a_\text{u}(1)=0$ and $C_{q_\text{s1}}A^{r_\text{d}-1}BC_\text{d}B_\text{d}L_\text{a}a_\text{u}(0)=0$, respectively, the impact of $a_\text{u}(0)$ cannot be removed from the $q_\text{s1}$-th communication channel and \eqref{e:th_coding:c2} cannot be satisfied. This completes the proof of the theorem. 
\end{proof}

The main objective in Theorem~\ref{th:coding_cov} is to design the coding scheme such that the impact of actuator attack signals show up in the sensor measurements that are secured. Moreover, the proposed design specifications in Theorem~\ref{th:coding_cov} ensure that adversaries are not capable of removing the impact of their actuator attacks from sensor measurements that are transmitted through secure output communication channels $q_\text{s1}$ and $q_\text{s2}$. Hence, as the main implication of Theorem~\ref{th:coding_cov}, the CPS operators can prevent adversaries from executing covert attacks by securing $3$ input and output communication channels and employing the proposed coding scheme in this subsection.

According to Theorem~\ref{th:coding_cov}, in order to design the coding scheme, $D_\text{d}$ should satisfy $C_{q_\text{s1}}A^{r_\text{d}-1}BD_\text{d}L_\text{a}=0$ and $\text{ker}(C_{q_\text{s2}}A^{r_\text{d}-1}BD_\text{d}L_\text{a})\cap\text{ker}(C_{q_\text{s1}}A^{r_\text{d}}BD_\text{d}L_\text{a})=0$, simultaneously. Consequently, $C_\text{d}$ and $B_\text{d}$ should satisfy $C_{q_\text{s1}}A^{r_\text{d}-1}BC_\text{d}B_\text{d}L_\text{a}=0$, where one simple design choice could be $C_\text{d}B_\text{d}=D_\text{d}$. Also, Theorem~\ref{th:coding_cov} does not impose any design conditions on $A_\text{d}$. After designing the decoder $\mathcal{D}$, the encoder $\mathcal{E}$ should be developed as per Lemma~\ref{lem:en_de}.

\section{Numerical Case Study: Flight Control System of a Fighter Aircraft}\label{s:simu}
In this section, we study covert cyber-attacks, as per Definition~\ref{def:covert_IO}, in the flight control system of a fighter aircraft. Dynamics of the linearized aircraft system with the sampling period of $T_\text{s}=0.5$ (s) are given by \cite{harkegaard2005resolving,boussaid2014ftc} 
\begin{align*}
	A=&\begin{bmatrix}
		1.0214 & 0.0054 & 0.0003 & 0.4176 & -0.0013 \\
		0 & 0.6307 & 0.0821 & 0 & -0.3792 \\
		0 & -3.4485 & 0.3779 & 0 & 1.1569 \\
		1.1199 & 0.0024 & 0.0001 & 1.0374 & -0.0003 \\
		0 & 0.3802 & -0.0156 & 0 & 0.8062
	\end{bmatrix}, \\
	B=& \begin{bmatrix}
		0.1823 & -0.1798 & -0.1795 & 0.0008 \\
		0 & -0.0639 & 0.0639 & 0.1397 \\
		0 & -1.5840 & 1.5840 & 0.2936 \\
		0.8075 & -0.6456 & -0.6456 & 0.0013 \\
		0 & -0.1005 & 0.1005 & -0.4114
	\end{bmatrix},\\
	C=& \begin{bmatrix}
		1 & 0 & 0 & 0 & 0 \\
		0 & 1 & 0 & 0 & 0 \\
		0 & 0 & 1 & 0 & 0 
	\end{bmatrix}.
\end{align*}

\textbf{Executing Covert Cyber-Attacks (Theorem~\ref{th:covert}):} We consider scenarios where each actuator of the system is attacked separately. If the first input channel is compromised by adversaries, i.e., $L_\text{a}=[1,0,0,0]$, we have $r_\text{a}^1=1$ and $C_1 A^{r_\text{a}^1-1}B_\text{a} = [0.1823,0,0]^\top$, where by definition $B_\text{a}=BL_\text{a}$. Moreover, according to Definition~\ref{def:relative_deg}, assuming that only the first input communication channel is compromised, a relative degree cannot be defined for the second and the third outputs since $C_q A^{i_q}B_\text{a} =0$, for any positive integer $i_q$ and $q=2$ and $3$. Hence, adversaries need to only compromise the first output communication channel to satisfy the condition $C_{1}A^{r_\text{a}^{1}-1}B_\text{a}=(D_\text{a}^*C)_{1}A^{r_\text{a}^{q}-1}B_\text{a}$ in Theorem~\ref{th:covert} and perform a covert cyber-attack, i.e., $D_\text{a}=[1,0,0]^\top$. 

Following Corollary~\ref{cor:covert}, we design a covert attack signal for the case where only the first input and the first output communication channels of the flight control system are under cyber-attacks. As shown in Fig.~\ref{fig:cov_act1}, in presence of system and process noise, a covert cyber-attack is executed and the sensor measurements are close to zero while the values of states are increasing and become unstable.

\begin{figure}[!t]
	\centering
	\centerline{\includegraphics[width=\columnwidth]{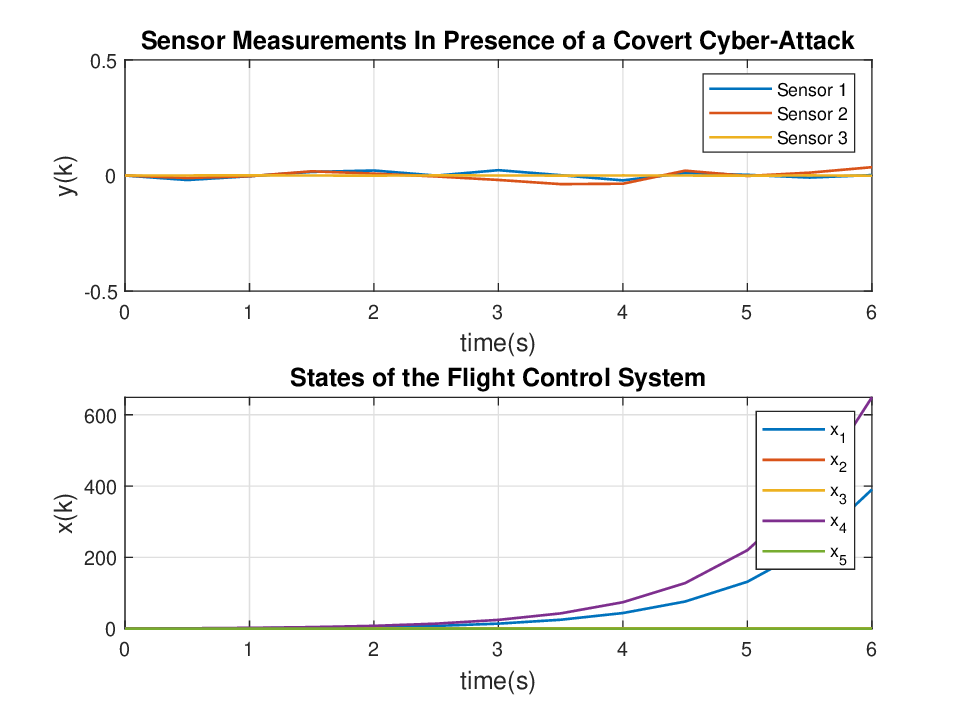}}
	\caption{Covert cyber-attack while the first input and the first output communication channels are compromised.}\label{fig:cov_act1}
\end{figure}

Having either the second or the third or the fourth input communication channels compromised yields a relative degree equal to $1$ for all the outputs, i.e., $r_\text{a}=1$ for $q=1,2,3$ if $L_\text{a}=[0,1,0,0]$ or $L_\text{a}=[0,0,1,0]$ or $L_\text{a}=[0,0,0,1]$. Consequently, in order to execute covert attacks, when any actuator other than the first one is under cyber-attacks, adversaries need to compromise all $3$ output communication channels of the system to satisfy the condition $C_{q}A^{r_\text{a}^{q}-1}B_\text{a}=(D_\text{a}^*C)_{q}A^{r_\text{a}^{q}-1}B_\text{a}$ in Theorem~\ref{th:covert}, for $q=1,\dots,3$.

\textbf{The Coding Scheme (Theorem~\ref{th:coding_cov}):} Our objective is to make the flight control system secure against covert cyber-attacks. Under Assumption~\ref{assum:en_de}, an encoder $\mathcal{E}$ and a decoder $\mathcal{D}$ with their dynamics given by \eqref{e:en} and \eqref{e:de} are considered, respectively. As per Assumption~\ref{assum:secureChannels}, we consider that the actuator $1$ and sensors $1$ and $2$ are secured, i.e., $q_{s1}=1$ and $q_{s2}=2$. Moreover, the decoder $\mathcal{D}$ and the encoder $\mathcal{E}$ satisfy the conditions in Theorem~\ref{th:coding_cov}. The characteristic matrices of the encoder and the decoder are $A_\text{d}=I_4$, $C_\text{d}=-D_\text{d}$, $B_\text{d}=D_\text{d}$, $D_\text{e}=D_{\text{d}}^{-1}$, $B_\text{e}=I_4$, $C_\text{e}=-B_{\text{d}}^{-1}B_\text{e}C_\text{d}$, $A_\text{e}=A_\text{d}-B_\text{e} C_\text{d}$, and
\begin{equation*}
	D_\text{d}=\begin{bmatrix}
		1 & 0.575 & -0.0026 & 0.574 \\
		0 & 0.7911 & 0.0009 & -0.2085 \\
		0 & -0.2085 & 0.0009 & 0.7918 \\
		0 & 0.0009 & 1 & 0.0009
	\end{bmatrix}.
\end{equation*}

Considering that the first and the second sensors are secured, we design the actuator attack signal such that its impact on these two sensors is zero. In order to achieve this goal, we have used the results in \cite{safcpsuua,siolcps} to design the actuator attack signal to be a controllable cyber-attack that its impact cannot be observed in sensors $1$ and $2$. Also, since sensor $3$ is not secured and can be manipulated by adversaries, the sensor attack signal is designed to cancel out the impact of the actuator attack signal on this sensor. Consequently, in Fig.~\ref{fig:cov_3act}, it can be seen that the impact of the actuator attack signal cannot be observed in the sensor readings. As depicted in Fig.~\ref{fig:coding}, in presence of the proposed dynamic coding scheme, adversaries cannot eliminate the impact of their actuator attack signals from sensor measurements despite knowing the parameters of the encoder and the decoder and the attack signal can now be observed in the sensor measurements.

\begin{figure}[!t]
	\centering
	\centerline{\includegraphics[width=\columnwidth]{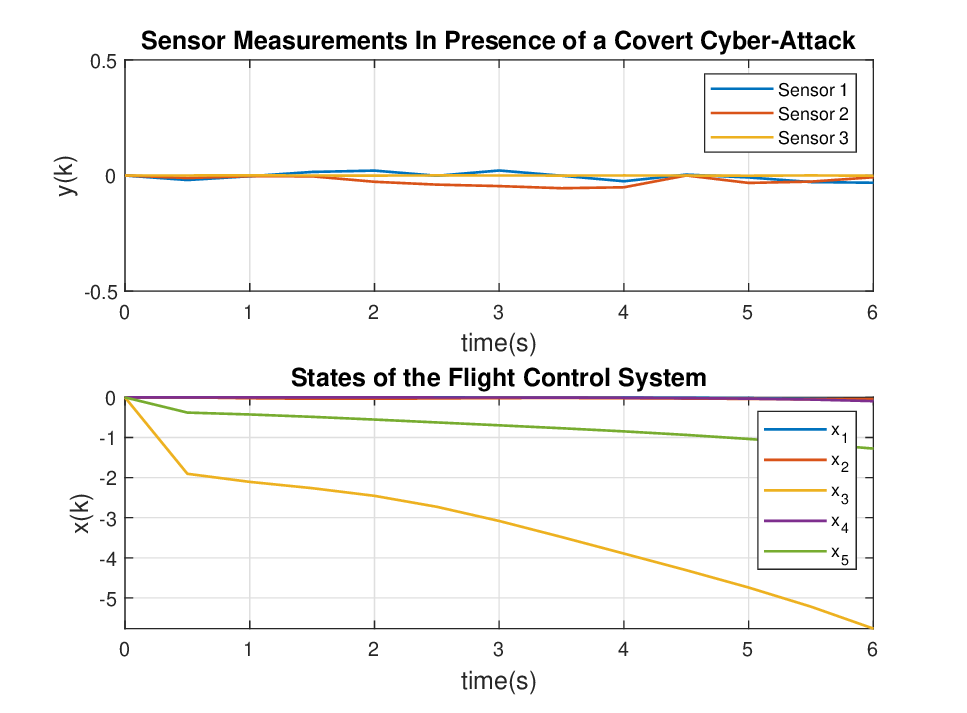}}
	\caption{Covert cyber-attack while actuators $2,\, 3$, and $4$ along with sensor $1$ are compromised.}\label{fig:cov_3act}
\end{figure}

\begin{figure}[!t]
	\centering
	\centerline{\includegraphics[width=\columnwidth]{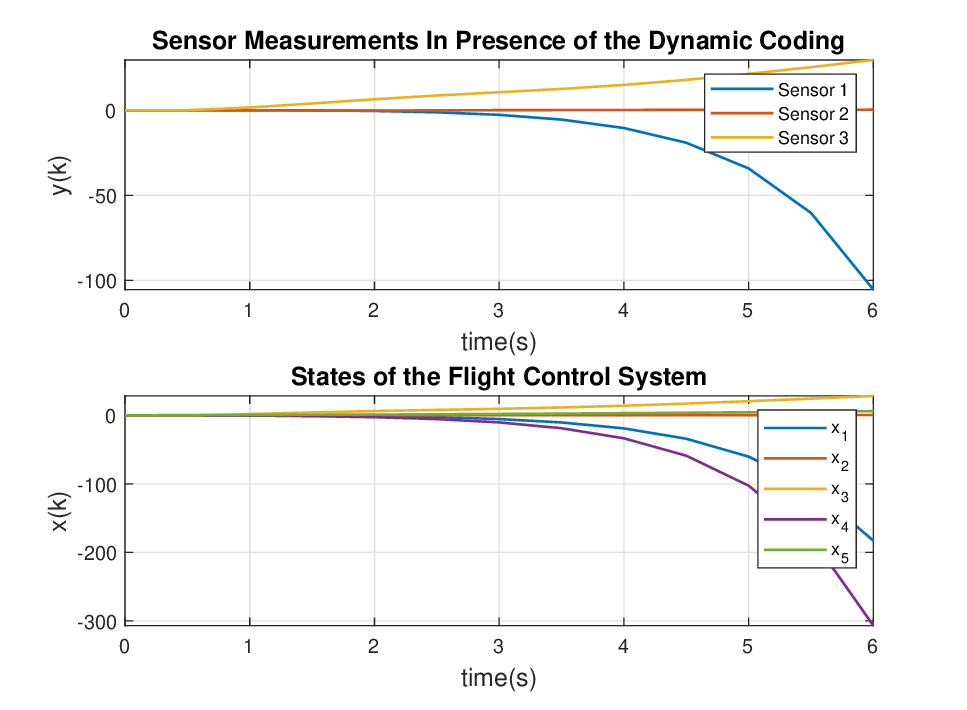}}
	\caption{Covert cyber-attack in presence of the dynamic coding scheme.}\label{fig:coding}
\end{figure}

\section{Conclusion}\label{s:conclu}
This paper has addressed two key challenges related to covert cyber-attacks in cyber-physical systems (CPS). We have investigated and formulated necessary and sufficient conditions in the sense of disruption resources of the CPS that adversaries need in order to carry out covert cyber-attacks. These conditions can be employed to determine the input and output communication channels required for executing covert attacks. As a countermeasure against covert cyber-attacks, a dynamic coding scheme has been introduced and developed. Under certain conditions and assuming the existence of one secure input and two secure output communication channels, the proposed coding scheme effectively prevents adversaries from executing covert cyber-attacks. As for our future work, we will work on developing distributed dynamic coding schemes to prevent undetectable cyber-attacks in multi-agent systems.

\bibliographystyle{IEEEtran}
\bibliography{Refs/CovRef}

\end{document}